\newtheorem{theorem}{Theorem}[section]
\newtheorem{lemma}[theorem]{Lemma}
\newtheorem{proposition}[theorem]{Proposition}
\theoremstyle{definition}
\newtheorem{definition}[theorem]{Definition}
\newtheorem{example}[theorem]{Example}
\theoremstyle{remark}
\numberwithin{equation}{section}
\newcommand{\Diam}{\mathrm{Diam}}
\newcommand{\Split}{\mathrm{Split}}
\newcommand{\Gran}{\mathrm{Gran}}
\newcommand{\s}{\mathrm{s}}
\newcommand{\Post}{\mathrm{Post}}
\newcommand{\Bisim}{\mathrm{Bisim}}
\begin{document}

\begin{abstract}                    
Symbolic models have been recently used as a sound mathematical formalism for the formal verification and control design of purely continuous and hybrid systems. 
In this note we propose a sequence of symbolic models that approximates a discrete--time Piecewise Affine (PWA) system in the sense of approximate simulation and converges to the PWA system in the so--called simulation metric. Symbolic control design is then addressed with specifications expressed in terms of non--deterministic finite automata. A sequence of symbolic control strategies is derived which converges, in the sense of simulation metric, to the maximal controller solving the given specification on the PWA system. 
\end{abstract}

\title[Symbolic Models and Control of Discrete--Time Piecewise Affine Systems]{Symbolic Models and Control of Discrete--Time Piecewise Affine Systems: An Approximate Simulation Approach}
\thanks{The research leading to these results has been partially supported by the Center of Excellence DEWS and received funding from the European Union Seventh Framework Programme [FP7/2007-2013] under grant agreement n.257462 HYCON2 Network of excellence.}

\author[Giordano Pola]{
Giordano Pola$^{1}$}
\address{$^{1}$
Department of Electrical and Information Engineering, Center of Excellence DEWS,
University of L{'}Aquila, Via G. Gronchi, 67100 L{'}Aquila, Italy}
\email{ \{giordano.pola\}@univaq.it}

\author[Maria D. Di Benedetto]{
Maria D. Di Benedetto$^{2}$}
\address{$^{2}$
Department of Electrical and Information Engineering, Center of Excellence DEWS,
University of L{'}Aquila, Via G. Gronchi, 67100 L{'}Aquila, Italy}
\email{ \{mariadomenica.dibenedetto\}@univaq.it}

\maketitle

\section{Introduction}

Piecewise Affine (PWA) systems have been extensively studied in the past and important research advances have been achieved, which comprise research topics on 
stability and stabilizability, observability, controllability, identification, optimal control and reachability. In spite of a well established literature on PWA systems, it is known that reachability problems for PWA systems are undecidable \cite{WhatsDecidable}. This poses serious difficulties for the formal verification and control design of such systems and spurred some researchers to approach the analysis and control of PWA systems through approximating techniques and in particular, by resorting to symbolic models. A symbolic model of a continuous or hybrid system is a finite state automaton in which a symbolic state corresponds to an aggregate of continuous states and a symbolic control label to an aggregate of continuous control inputs. Symbolic models have been employed in \cite{Lee2012,BeltaTACTN2010,Belta2012} as an effective tool to address stabilizability problems, formal verification and control design of discrete--time PWA systems. 
The work in \cite{Lee2012} explores the use of symbolic models for stabilizability problems while the work in \cite{BeltaTACTN2010} for solving formal verification problems; these papers consider PWA systems with no control inputs. The work in \cite{Belta2012} instead, considers PWA systems with control inputs and uses symbolic models for solving control problems with temporal logic--types specifications. In \cite{Lee2012,BeltaTACTN2010} a sequence of abstractions is proposed which approximates the PWA system in the sense of simulation relations. While being provably correct, the results in \cite{BeltaTACTN2010,Belta2012} do not quantify the conservativeness of the approach in the formal verification and control design of PWA systems. Quantifying conservativeness is important to evaluate how far the solutions based on symbolic models are from the corresponding solutions in the pure hybrid domain. In this note we propose a framework based on the notion of approximate simulation \cite{AB-TAC07}, a generalization of the notion of simulation to metric systems, where the accuracy of the approximation scheme is formally quantified and convergence properties are derived. We define a sequence of symbolic models that approximate a PWA system in the sense of approximate simulation, so that the distance between the symbolic models and the PWA system can be quantified through the notion of simulation metric. 
The sequence is proven to converge in the simulation metric to the PWA system. 
Symbolic control design is then addressed where specifications are expressed in terms of non--deterministic finite automata. We propose a sequence of symbolic control strategies that solve the control design problem with increasing accuracy.  
The sequence is proven to converge in the simulation metric to the maximal controller solving the given specification on the original PWA system. 
An illustrative example is included which shows the main results of the note. 
The present paper presents a mature version of the results appeared in \cite{PolaADHS2012}, which includes proofs and an illustrative example. 

\section{Notation and Preliminary Definitions}\label{sec2}
We denote by $2^{X}$ the set of subsets of a set $X$. 
We identify a binary relation $\mathcal{R}\subseteq X\times Y$ with the map \mbox{$\mathcal{R}:X\rightarrow 2^{Y}$} defined by $y\in \mathcal{R}(x)$ if and only if \mbox{$(x,y)\in \mathcal{R}$}. Given a relation $\mathcal{R}\subseteq X\times Y$, the symbol $\mathcal{R}^{-1}$ denotes the inverse relation of $\mathcal{R}$, i.e. \mbox{$\mathcal{R}^{-1}:=\{(y,x)\in Y \times X:(x,y)\in \mathcal{R}\}$}. A graph is an ordered pair $\mathcal{G}=(\mathcal{N},\mathcal{E})$ comprising a set $\mathcal{N}$ of nodes together with a set $\mathcal{E}\subseteq \mathcal{N}\times \mathcal{N}$ of edges. Graph $\mathcal{G}=(\mathcal{N},\mathcal{E})$ is a subgraph of graph $\mathcal{G}'=(\mathcal{N}',\mathcal{E}')$ if $\mathcal{N}\subseteq \mathcal{N}'$ and $\mathcal{E}\subseteq \mathcal{E}'$. 
A connected component of a graph is a subgraph in which any two nodes are connected to each other by paths, and which is connected to no additional nodes in the original graph. 
The symbols $\mathbb{Z}$, $\mathbb{N}_{0}$, $\mathbb{R}$, $\mathbb{R}^{+}$ and $\mathbb{R}_{0}^{+}$ denote the set of integers, non--negative integers, reals, positive and non--negative reals, respectively. The symbol $\Vert \cdot \Vert$ denotes the infinity norm. Given $i_{1},i_{2}\in\mathbb{N}_{0}\cup\{\infty\}$ with $i_{1}<i_{2}$ we denote by $[i_{1};i_{2}]$ the set $\{i_{1},i_{1}+1,...,i_{2}\}$. 
A polyhedron $P\subseteq \mathbb{R}^{n}$ is a set obtained by the intersection of a finite number of (open or closed) half--spaces. A polytope is a bounded polyhedron. 
Given a set $X$, a function $\mathbf{d}:X\times X\rightarrow \mathbb{R}^{+}_{0}\cup\{\infty\}$ is a quasi--pseudo--metric for $X$ if 
(i) for any $x\in X$, $\mathbf{d}(x,x)=0$ and (ii) for any $x,y,z\in X$, $\mathbf{d}(x,y)\leq \mathbf{d}(x,z)+\mathbf{d}(z,y)$. 
If condition (i) is replaced by (i') $\mathbf{d}(x,y)=0$ if and only if $x=y$, then $\mathbf{d}$ is said to be a quasi--metric for $X$. If function $\mathbf{d}$ enjoys properties (i), (ii) and property (iii) for any $x,y\in X$, $\mathbf{d}(x,y)=\mathbf{d}(y,x)$,
then $\mathbf{d}$ is said a pseudo--metric for $X$. If function $\mathbf{d}$ enjoys properties (i'), (ii) and (iii), it is said a metric for $X$. 
When function $\mathbf{d}$ is a (quasi) (pseudo) metric for $X$, the pair $(X,\mathbf{d})$ is said a (quasi) (pseudo) metric space.
From \cite{QPM}, given a quasi--pseudo--metric space $(X,\mathbf{d})$, a sequence $\{x_{i}\}_{i\in\mathbb{N}_{0}}$ over $X$ is left (resp. right) $\mathbf{d}$--convergent to $x^{\ast}\in X$, denoted $\stackrel [\leftarrow]{}{\lim} x_{i}=x^{\ast}$ (resp. $\stackrel [\rightarrow]{}{\lim} x_{i}=x^{\ast}$), if for any $\varepsilon\in\mathbb{R}^{+}$ there exists $N\in\mathbb{N}_{0}$ such that $\mathbf{d}(x_{i},x^\ast)\leq \varepsilon$ (resp. $\mathbf{d}(x^\ast,x_{i})\leq \varepsilon$) for any $i \geq N$. 
Given $X\subseteq \mathbb{R}^{n}$ we denote by $\mathbf{d}_{h}$ the Hausdorff pseudo--metric induced by the infinity norm $\Vert \cdot \Vert$ on $2^{X}$; we recall that for any $X_{1},X_{2}\subseteq X$, \mbox{$\mathbf{d}_{h}(X_{1},X_{2}):=\max\{\vec{\mathbf{d}}_{h}(X_{1},X_{2}),\vec{\mathbf{d}}_{h}(X_{2},X_{1})\}$},
where \mbox{$\vec{\mathbf{d}}_{h}(X_{1},X_{2})=\sup_{x_{1}\in X_{1}}\inf_{x_{2}\in X_{2}} \Vert x_{1}-x_{2}\Vert$} is the Hausdorff quasi--pseudo--metric. 

\section{Piecewise Affine Systems} \label{sec3}
In this note we consider the class of discrete--time Piecewise Affine (PWA) systems described by the triplet $\Sigma=(\mathbb{R}^{n},\mathcal{U},\{\Sigma_{1},\Sigma_{2},...,\Sigma_{N}\})$, where $\mathbb{R}^{n}$ is the state space, $\mathcal{U}\subseteq \mathbb{R}^{m}$ is the set of control inputs and $\Sigma_{i}$ is a constrained affine control system defined by:
\[
\left\{
\begin{array}
{l}
x_{i}(t+1)=A_{i}x_{i}(t)+B_{i}u_{i}(t)+f_{i},\\
x_{i}(t)\in X_{i},u_{i}(t)\in \mathcal{U}.
\end{array}
\right.
\]
We suppose that the sets $X_{i}\subseteq \mathbb{R}^{n}$ are polyhedral, with interior, and that their collection is a partition of $\mathbb{R}^{n}$; moreover we suppose that the set $\mathcal{U}$ is polyhedral. We denote by $\mathbf{x}(t,x_{0},\mathbf{u})$ the state reached by $\Sigma$ at time $t\in\mathbb{N}_{0}$ starting from an initial state $x_{0}\in\mathbb{R}^{n}$ with control input $\mathbf{u}:\mathbb{N}_{0}\rightarrow \mathcal{U}$. Since $\{X_{i}\}_{i\in [1;N]}$ is a partition of $\mathbb{R}^{n}$ the PWA system $\Sigma$ is deterministic. 
In this note we are interested in the evolution of PWA systems within bounded subsets of the state space $\mathbb{R}^{n}$. This choice is motivated by the fact that in many applications, physical variables such as velocities, temperatures, pressures, voltages, take value within bounded sets. Let $\mathcal{X}$ be a polytopic subset of $\mathbb{R}^{n}$ that represents the region of the state space of $\Sigma$ which we are interested in. 
Define $\mathcal{X}_{i}=X_{i}\cap \mathcal{X}$ ($i\in [1;N]$) and denote by $\mathcal{P}(\mathcal{X})$ the set of polytopic subsets of $\mathcal{X}$.

\section{Symbolic Systems and Approximate Relations} \label{sec5}
We use the notion of systems as a unified framework to describe PWA systems as well as their symbolic models. 

\begin{definition}
\label{SysDef}
\cite{paulo} 
A system is a quintuple $S=(X,U,\rTo,Y,H)$ consisting of a set of states $X$, a set of inputs $U$, a transition relation $\rTo\subseteq X\times U\times X$, a set of outputs $Y$ and an output function $H:X\rightarrow Y$. A transition $(x,u,x^{\prime})\in\rTo$ of $S$ is denoted by $x\rTo^{u}x^{\prime}$. A state run of $S$ 
with length $T\in\mathbb{N}_{0}\cup\{\infty\}$ is a (possibly infinite) sequence of transitions $x_{0}\rTo^{u_{1}}x_{1}\rTo^{u_{2}} \,...\,\rTo^{u_{T}} x_{T}$ of $S$. An output run of $S$ with length $T\in\mathbb{N}_{0}\cup\{\infty\}$ is a (possibly infinite) sequence of output symbols $y_{0},\,y_{1},\,...\, ,y_{T}$
such that for all $y_{i}$ and $y_{i+1}$ there exists $x_{i}\rTo^{u_{i+1}}x_{i+1}$ such that $y_{i}=H(x_{i})$ and $y_{i+1}=H(x_{i+1})$. System $S$ is said to be 
\textit{symbolic}, if $X$ and $U$ are finite sets; \textit{(pseudo) metric}, if $Y$ is equipped with a (pseudo) metric $\mathbf{d}$; \textit{deterministic}, if for any state $x\in X$ and any input $u\in U$ there exists at most one transition $x\rTo^{u}x'$.
\end{definition}
In this note we use the notions of approximate simulation and bisimulation to relate properties of PWA systems and of their symbolic systems.

\begin{definition}
\label{ASR} 
\cite{AB-TAC07}
Let \mbox{$S_{1}=(X_{1},U_{1},\rTo_{1},Y_{1},H_{1})$} and \mbox{$S_{2}=(X_{2},U_{2},\rTo_{2},Y_{2},H_{2})$} be (pseudo) metric systems with the same output sets $Y_{1}=Y_{2}$ and (pseudo) metric $\mathbf{d}$ and consider a precision $\varepsilon\in\mathbb{R}^{+}_{0}$. A relation $\mathcal{R}\subseteq X_{1} \times X_{2}$ is an $\varepsilon$--approximate simulation relation from $S_{1}$ to $S_{2}$ if for every $(x_{1},x_{2})\in \mathcal{R}$ the following conditions are satisfied: 
(i) $\mathbf{d}(H_{1}(x_{1}),H_{2}(x_{2}))\leq\varepsilon$ and (ii) existence of $x_{1} \rTo_{1}^{u_{1}}x'_{1}$ implies existence of $x_{2}\rTo_{2}^{u_{2}}x'_{2}$ such that $(x^{\prime}_{1},x^{\prime}_{2})\in \mathcal{R}$. 
System $S_{1}$ is said to be $\varepsilon$--approximately simulated by $S_{2}$ or $S_{2}$ $\varepsilon$--approximately simulates $S_{1}$, denoted \mbox{$S_{1}\preceq_{\varepsilon}S_{2}$}, if $\mathcal{R}(X_{1})=X_{2}$. When $\varepsilon=0$, system $S_{1}$ is said to be exactly simulated by system $S_{2}$, or equivalently, $S_{2}$ exactly simulates $S_{1}$. 
Relation $\mathcal{R}$ is an \mbox{$\varepsilon$--approximate} bisimulation relation between $S_{1}$ and $S_{2}$ if: 
(iii) $\mathcal{R}$ is an \mbox{$\varepsilon$--approximate} simulation relation from $S_{1}$ to $S_{2}$, and (iv) $\mathcal{R}^{-1}$ is an \mbox{$\varepsilon$--approximate} simulation relation from $S_{2}$ to $S_{1}$. 
Systems $S_{1}$ and $S_{2}$ are $\varepsilon$--approximately bisimilar if $\mathcal{R}(X_{1})=X_{2}$ and $\mathcal{R}^{-1}(X_{2})=X_{1}$. If $\varepsilon=0$, $S_{1}$ and $S_{2}$ are said to be (exactly) bisimilar.
\end{definition}

In the sequel we will work with the set $\mathcal{S}(\mathcal{P}(\mathcal{X}),\mathbf{d}_{h})$ of pseudo--metric systems with output pseudo--metric space $(\mathcal{P}(\mathcal{X}),\mathbf{d}_{h})$. The notion of approximate simulation relations induces certain metrics on $\mathcal{S}(\mathcal{P}(\mathcal{X}),\mathbf{d}_{h})$. 
\begin{definition}
\cite{AB-TAC07}
Consider two pseudo--metric systems $S_{1},S_{2}\in \mathcal{S}(\mathcal{P}(\mathcal{X}),\mathbf{d}_{h})$. 
The simulation metric $\vec{\mathbf{d}}_{\s}$ from $S_{1}$ to $S_{2}$ is defined by 
$ \vec{\mathbf{d}}_{\s}(S_{1},S_{2})=\inf\{\varepsilon \in\mathbb{R}^{+}_{0} | S_{1}\preceq_{\varepsilon}S_{2} \}$. 
\end{definition}

\begin{theorem}
\cite{AB-TAC07} 
The pair $(\mathcal{S}(\mathcal{P}(\mathcal{X}),\mathbf{d}_{h}),\vec{\mathbf{d}}_{\s})$ is a quasi--pseudo--metric space\footnote{In \cite{AB-TAC07} quasi--pseudo--metric spaces are termed directed pseudo--metric spaces.}. 
\end{theorem}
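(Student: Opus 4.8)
The plan is to verify the two defining axioms of a quasi--pseudo--metric from Definition \ref{MetricDef}: the reflexivity condition $\vec{\mathbf{d}}_{\s}(S,S)=0$ and the triangle inequality $\vec{\mathbf{d}}_{\s}(S_1,S_2)\leq \vec{\mathbf{d}}_{\s}(S_1,S_3)+\vec{\mathbf{d}}_{\s}(S_3,S_2)$. Symmetry is deliberately \emph{not} required, which is exactly why the outcome is only a quasi--pseudo--metric: the quantity $\vec{\mathbf{d}}_{\s}$ is directional by construction, since $S_1\preceq_{\varepsilon}S_2$ and $S_2\preceq_{\varepsilon}S_1$ are genuinely different conditions.

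For reflexivity I would exhibit the identity relation $\mathcal{R}=\{(x,x):x\in X\}$ as a $0$--approximate simulation from $S$ to itself. Condition (i) of Definition \ref{ASR} holds because $\mathbf{d}_{p}(H(x),H(x))=0$ by property (i) of the pseudo--metric $\mathbf{d}_{p}$ established in Theorem \ref{ThMain0}; condition (ii) holds trivially by matching each transition with itself; and $\mathcal{R}(X)=X$. Hence $S\preceq_{0}S$, so the defining set of $\vec{\mathbf{d}}_{\s}(S,S)$ contains $0$, and since the infimum is taken over non--negative reals we get $\vec{\mathbf{d}}_{\s}(S,S)=0$.

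The core of the argument is the triangle inequality, which rests on the fact that approximate simulation relations compose additively. Concretely, I would prove the following claim: if $\mathcal{R}_{1}$ is an $\varepsilon_{1}$--approximate simulation from $S_{1}$ to $S_{3}$ with $\mathcal{R}_{1}(X_{1})=X_{3}$ and $\mathcal{R}_{2}$ is an $\varepsilon_{2}$--approximate simulation from $S_{3}$ to $S_{2}$ with $\mathcal{R}_{2}(X_{3})=X_{2}$, then the composition $\mathcal{R}:=\mathcal{R}_{2}\circ\mathcal{R}_{1}$, where $(x_{1},x_{2})\in\mathcal{R}$ iff $(x_{1},x_{3})\in\mathcal{R}_{1}$ and $(x_{3},x_{2})\in\mathcal{R}_{2}$ for some $x_{3}\in X_{3}$, witnesses $S_{1}\preceq_{\varepsilon_{1}+\varepsilon_{2}}S_{2}$. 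Condition (i) follows from the triangle inequality of $\mathbf{d}_{p}$: for $(x_{1},x_{2})\in\mathcal{R}$ with intermediary $x_{3}$, one has $\mathbf{d}_{p}(H_{1}(x_{1}),H_{2}(x_{2}))\leq \mathbf{d}_{p}(H_{1}(x_{1}),H_{3}(x_{3}))+\mathbf{d}_{p}(H_{3}(x_{3}),H_{2}(x_{2}))\leq\varepsilon_{1}+\varepsilon_{2}$. Condition (ii) follows by chaining the two matching properties: a transition out of $x_{1}$ is first matched by a transition out of $x_{3}$ with matched targets related by $\mathcal{R}_{1}$, and that target is then matched by a transition out of $x_{2}$ with targets related by $\mathcal{R}_{2}$; the new intermediary target serves as the witness placing the matched pair in $\mathcal{R}$. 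Coverage $\mathcal{R}(X_{1})=\mathcal{R}_{2}(X_{3})=X_{2}$ is immediate.

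With the composition claim in hand, the triangle inequality for $\vec{\mathbf{d}}_{\s}$ follows by a standard infimum manipulation. If either $\vec{\mathbf{d}}_{\s}(S_{1},S_{3})$ or $\vec{\mathbf{d}}_{\s}(S_{3},S_{2})$ equals $+\infty$ the inequality is trivial, so I would assume both finite. Given $\delta>0$, I would select $\varepsilon_{1}\leq\vec{\mathbf{d}}_{\s}(S_{1},S_{3})+\delta/2$ with $S_{1}\preceq_{\varepsilon_{1}}S_{3}$ and $\varepsilon_{2}\leq\vec{\mathbf{d}}_{\s}(S_{3},S_{2})+\delta/2$ with $S_{3}\preceq_{\varepsilon_{2}}S_{2}$, exploiting that $\preceq_{\varepsilon}$ is monotone in $\varepsilon$, since a relation satisfying condition (i) at precision $\varepsilon$ also satisfies it at any larger precision. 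The composition claim then yields $S_{1}\preceq_{\varepsilon_{1}+\varepsilon_{2}}S_{2}$, whence $\vec{\mathbf{d}}_{\s}(S_{1},S_{2})\leq\varepsilon_{1}+\varepsilon_{2}\leq\vec{\mathbf{d}}_{\s}(S_{1},S_{3})+\vec{\mathbf{d}}_{\s}(S_{3},S_{2})+\delta$, and letting $\delta\to 0$ closes the argument. The main point of the proof is the composition claim, and within it the only delicate bookkeeping is verifying condition (ii): one must track directionality carefully so that the intermediate target produced by the first matching is precisely the state fed into the second matching. No genuine obstacle arises, however, because Theorem \ref{ThMain0} already supplies the triangle inequality of $\mathbf{d}_{p}$ that makes the output estimate in condition (i) go through.
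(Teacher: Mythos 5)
Your proof is correct, and it is essentially the canonical argument: the paper itself states this theorem without proof, importing it from \cite{AB-TAC07}, where the reasoning is exactly yours---reflexivity witnessed by the identity relation, and the triangle inequality obtained by showing that an $\varepsilon_{1}$--approximate simulation relation composed with an $\varepsilon_{2}$--approximate simulation relation is an $(\varepsilon_{1}+\varepsilon_{2})$--approximate simulation relation, followed by the standard infimum manipulation. Your bookkeeping is also consistent with this paper's particular conventions, since the coverage condition $\mathcal{R}(X_{1})=X_{2}$ in Definition \ref{ASR} propagates through the composition as $\mathcal{R}_{2}(\mathcal{R}_{1}(X_{1}))=\mathcal{R}_{2}(X_{3})=X_{2}$, and you correctly dispose of the case where a distance is infinite.
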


\section{Sequences of Symbolic Models} \label{sec6}
The expressive power of the notion of systems as in Definition \ref{SysDef} is general enough to describe the evolution of PWA systems within the bounded region $\mathcal{X}$ of the state space $\mathbb{R}^{n}$. 

\begin{definition}
\label{SysPWA}
Given the PWA system $\Sigma$ and the polytopic subset $\mathcal{X}$ of $\mathbb{R}^{n}$ define the pseudo--metric system $\mathbb{S}(\Sigma)=(\mathbb{X},\mathbb{U},\rTo,\mathbb{Y},\mathbb{H})$, where $\mathbb{X}=\mathcal{X}$; $\mathbb{U}=\mathcal{U}$; $x \rTo^{u} x^{\prime}$, if $x\in \mathcal{X}_{i}$ and $x^{\prime}=A_{i}x+B_{i}u+f_{i}$; $\mathbb{Y}=\mathcal{P}(\mathcal{X})$, equipped with $\mathbf{d}_{h}$; $\mathbb{H}(x)=\{x\}$. 
\end{definition}

System $\mathbb{S}(\Sigma)$ preserves important properties of $\Sigma$, such as reachability and determinism. Also, since $\mathbf{d}_{h}(\{x\},\{y\})=\Vert x-y \Vert$, metric properties of $\Sigma$ are naturally transferred to $\mathbb{S}(\Sigma)$ and vice versa. Although system $\mathbb{S}(\Sigma)$ correctly describes $\Sigma$ within the bounded set $\mathcal{X}$, it is not symbolic because $\mathbb{X}$ and $\mathbb{U}$ are not finite sets. For this reason we introduce in the sequel a sequence of symbolic models $\mathbb{A}_{M}(\Sigma)$ that approximate the PWA system $\Sigma$. To this purpose we first need to introduce two operators.

\begin{definition}
Given a PWA system $\Sigma$, the bisimulation operator is the map $\Bisim:2^{\mathcal{X}}\rightarrow 2^{\mathcal{X}}$ that associates to any $Y_1,Y_2,...,Y_L\subseteq \mathcal{X}$ the collection 
$\Bisim(\{Y_1,Y_2,...,Y_L\})$ of sets $
\{
x\in Y_{j} | \exists u\in \mathcal{U}$ s.t. $A_{i}x+B_{i}u+f_{i}\in Y_{j^{\prime}},x\in \mathcal{X}_{i}
\}$ ($j,j^{\prime}\in [1;L]$).
\end{definition}

The operator $\Bisim$ transforms sets of polytopes into sets of polytopes. Note that in general, sets in $\Bisim(\{Y_1,Y_2,...,Y_L\})$ can be overlapping. 
The above definition of the bisimulation operator has been obtained by adapting  to PWA systems standard fixed point formulations of bisimulation algorithms (see e.g. \cite{ModelChecking,paulo}). A fixed point of the operator $\Bisim$, initialized with the partition $\{\mathcal{X}_{1},\mathcal{X}_{2},...,\mathcal{X}_{N}\}$ of $\Sigma$, corresponds to a finite bisimulation of $\Sigma$. Sufficient conditions for existence of finite bisimulations have been identified in \cite{LeeCDC11} for discrete-time PWA systems and in \cite{Vladimerou2008} for continuous--time PWA systems. Other 
classes of dynamical and control systems have been identified in the literature which admit finite bisimulation, as for example timed automata, multi--rate automata, rectangular automata, o--minimal hybrid systems \cite{DiscAbs} and controllable discrete--time linear systems \cite{TabuadaLTL}. 
We can now introduce the splitting operator. We recall that the diameter $\Diam(X)$ of a set $X\subseteq \mathbb{R}^{n}$ is defined by $\Diam(X)=\sup_{x,y\in X} \Vert x-y \Vert$. 

\begin{definition}
\label{DefSplit}
Consider a finite collection of polytopes $\mathbb{P}=\{P_{1},P_{2},...,P_{N}\}\subset \mathcal{P}(\mathcal{X})$. 
A splitting policy with contraction rate $\lambda\in ]0,1[$ for $\mathbb{P}$ is a map $\Phi_{\lambda}:\mathbb{P}\rightarrow 2^{\mathcal{P}(\mathcal{X})}$ 
enjoying the following properties: (i) the cardinality of $\Phi_{\lambda}(P_{i})$ is finite; (ii) $\Phi_{\lambda}(P_{i})$ is a partition of $P_{i}$; (iii) $\Diam(P_{i}^{j})\leq \lambda \Diam(P_{i})$ for all $P_{i}^{j}\in \Phi_{\lambda}(P_{i})$.
\end{definition}

In the sequel, $\Split_{\lambda}$ denotes a splitting policy with contraction rate $\lambda$ and we abuse notation by writing  $\Split_{\lambda}(\{P_{1},P_{2},...,P_{N}\})$ instead of $\bigcup_{i\in[1;N]}\Split_{\lambda}(P_{i})$. An example of splitting policy is reported in Section VII. 
The practical computation of operators $\Bisim$ and $\Split$ is based on basic manipulations of polytopes; the interested reader can refer to \cite{Belta2012} where similar computations are described in detail.\\
We now have all the ingredients to introduce a sequence of abstractions $\mathbb{A}_{M}(\Sigma)$ approximating the PWA system $\Sigma$. Consider the following recursive equations: 
\begin{equation}
\left\{
\begin{array}
{l}
\mathbf{X}_{0}=\{\mathcal{X}_{1},\mathcal{X}_{2},...,\mathcal{X}_{N}\},\\
\mathbf{X}_{M+1}=\Split_{\lambda}(\Bisim(\mathbf{X}_{M})), M\in\mathbb{N}_{0}.
\end{array}
\right.
\label{Eli}
\end{equation}

At each order $M\in \mathbb{N}_{0}$, the set $\mathbf{X}_{M}$ naturally induces a system that is formalized as follows.

\begin{definition}
\label{AbsDef}
Given the set $\mathbf{X}_{M}$ define the pseudo--metric system 
$ \mathbb{A}_{M}(\Sigma)=(\mathbb{X}_{M},\mathbb{U}_{M},\rTo_{M},$ $\mathbb{Y}_{M},\mathbb{H}_{M})$ 
where:
\begin{itemize}
\item $\mathbb{X}_{M}=\mathbf{X}_{M}\cup (\mathcal{X}\backslash \bigcup_{X\in \mathbf{X}_{M}} X )$. A state in $\mathbb{X}_{M}$ is denoted by $X_{M}^{j}$.
\item $X_{M}^{j} \rTo_{M}^{V} X_{M}^{j^{\prime}}$ if 
there exist $x\in X_{M}^{j}$ and $u\in \mathcal{U}$ such that $A_{i}x+B_{i}u+f_{i}\in  X_{M}^{j^{\prime}}$, and 
$V=\{u\in \mathcal{U} | \exists x\in X_{M}^{j} \text{ s.t. } A_{i}x+B_{i}u+f_{i}\in X_{M}^{j^{\prime}}\}$, 
where index $i$ is such that $X_{M}^{j}\subseteq \mathcal{X}_{i}$.
\item $\mathbb{U}_{M}$ is the collection of all sets $V\subseteq \mathcal{U}$ for which $X_{M}^{j} \rTo_{M}^{V} X_{M}^{j^{\prime}}$.
\item $\mathbb{Y}_{M}=\mathcal{P}(\mathcal{X})$, equipped with the pseudo--metric $\mathbf{d}_{h}$.
\item $\mathbb{H}_{M}(X_{M}^{j})=X_{M}^{j}$. 
\end{itemize}
\end{definition}

By construction, system $\mathbb{A}_{M}(\Sigma)$ is symbolic. Symbolic system $\mathbb{A}_{M+1}(\Sigma)$ can be viewed as a refinement of $\mathbb{A}_{M}(\Sigma)$. By definition of $\mathbb{X}_{M}$, the collection of sets in $\mathbb{X}_{M}$ is a covering of $\mathcal{X}$. The collection of sets in $\mathbb{U}_{M}$ instead, is in general not a covering of $\mathcal{U}$; this is because there may be control inputs that bring the state of $\Sigma$ outside the working region $\mathcal{X}$. 
The following result holds as a direct consequence of the definition of operator $\Bisim$. 

\begin{proposition}
\label{coro}
If $\mathbf{X}_{M}=\Bisim(\mathbf{X}_{M})$ then $\mathbb{A}_{M}(\Sigma)$ and $\mathbb{S}(\Sigma)$ are exactly bisimilar.
\end{proposition}

\begin{example}
Consider the PWA system $\Sigma=(\mathbb{R},\{0\},\{\Sigma_{1},\Sigma_{2}\})$, where $\Sigma_{i}$ is described by $x_{i}(t+1)=x_{i}(t)+3-2i$ and $\mathcal{X}_{i}=[-6+3i,-3+3i[$ ($i=1,2$) and let us compute the sequence of sets $\mathbf{X}_{i}$ defined in (\ref{Eli}). One first obtains $\Bisim(\mathbf{X}_{0})=\{[-3,-1[,$ $[-1,0[,[0,1[,[1,3[\}$. Note that $\Bisim(\mathbf{X}_{0})\neq \mathbf{X}_{0}$. Let $\rho=\max_{Y\in \Bisim(\mathbf{X}_{0})}\Diam(Y)=2$. Consider $\lambda=0.5$ and define for $a,b\in\mathbb{R}^{+}$ with $a<b$, $\Split_{\lambda}([a,b[)=\{[a,b[\}$, if $\Diam([a,b[)\leq \lambda \rho$ and $\Split_{\lambda}([a,b[)=\{[a,(a+b)/2[,[(a+b)/2,b[\}$, otherwise. It is easy to see that $\Split_{\lambda}$ satisfies the conditions of Definition \ref{DefSplit}. By a straightforward computation one gets 
$\mathbf{X}_{1}=\Split_{\lambda}(\Bisim(\mathbf{X}_{0}))=\{[-3,-2[,[-2,-1[,[-1,0[,[0,1[,$ $[1,2[,$ $[2,3[\}$ 
from which, $\Bisim(\mathbf{X}_{1})=\mathbf{X}_{1}$. 
Hence, from Proposition \ref{coro}, $\mathbb{A}_{1}(\Sigma)$ is an exact bisimulation of $\Sigma$.
\end{example}

We point out that in general, even if an exact bisimulation of a given PWA system $\Sigma$ exists, there is no guarantee it can be found by the recursive equations in (\ref{Eli}); this is because in general $\Split_{\lambda}$ does not satisfy the reachability properties of $\Sigma$. 
On the other hand, as we shall show in the sequel, the splitting operator is a key element to prove the convergence properties of the sequence $\mathbb{A}_{M}(\Sigma)$. 
We now proceed with a step further by providing a quantification of the accuracy of the approximation scheme that we propose. 
Define $\Gran(\mathbb{A}_{M}(\Sigma))=\max_{X_{M}^{j}\in \mathbb{X}_{M}}\Diam(X_{M}^{j})$. 
Function $\Gran$ provides a measure of the "granularity" of the symbolic system $\mathbb{A}_{M}(\Sigma)$ (i.e. how fine is the covering of the set $\mathcal{X}$). The following result provides an upper bound to the distance between the PWA system $\Sigma$ and the abstraction $\mathbb{A}_{M}(\Sigma)$. 

\begin{theorem}
\label{ThMain2}
$\vec{\mathbf{d}}_{\s}(\mathbb{S}(\Sigma),\mathbb{A}_{M}(\Sigma))\leq \Gran(\mathbb{A}_{M}(\Sigma))$. 
\end{theorem}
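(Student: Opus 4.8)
The plan is to exhibit, for the specific precision $\varepsilon=\Gran(\mathbb{A}_{M}(\Sigma))$, an $\varepsilon$--approximate simulation relation from $\mathbb{S}(\Sigma)$ to $\mathbb{A}_{M}(\Sigma)$ witnessing $\mathbb{S}(\Sigma)\preceq_{\varepsilon}\mathbb{A}_{M}(\Sigma)$; since $\vec{\mathbf{d}}_{\s}$ is defined as an infimum over all such $\varepsilon$, this immediately gives $\vec{\mathbf{d}}_{\s}(\mathbb{S}(\Sigma),\mathbb{A}_{M}(\Sigma))\leq\varepsilon$. I would reuse the relation from the proof of Theorem \ref{ThMain1}, namely $\mathcal{R}\subseteq\mathbb{X}\times\mathbb{X}_{M}$ defined by $(x,X_{M}^{j})\in\mathcal{R}$ if and only if $x\in X_{M}^{j}$.

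For condition (i) of Definition \ref{ASR}, I would fix $(x,X_{M}^{j})\in\mathcal{R}$ and split on the value of $\mathbb{H}_{M}(X_{M}^{j})$. If $\mathbb{H}_{M}(X_{M}^{j})=\varnothing$, then exactly as in (\ref{ma}) one obtains $\mathbf{d}_{p}(\{x\},\varnothing)=0\leq\varepsilon$. If instead $\mathbb{H}_{M}(X_{M}^{j})=X_{M}^{j}$, then since $x\in X_{M}^{j}$ we have $\{x\}\backslash X_{M}^{j}=\varnothing$, whence $\mathbf{d}_{p}(\{x\},X_{M}^{j})=\Diam(X_{M}^{j}\backslash\{x\})\leq\Diam(X_{M}^{j})=\Diam(\mathbb{H}_{M}(X_{M}^{j}))\leq\Gran(\mathbb{A}_{M}(\Sigma))=\varepsilon$. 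This is the only place where the granularity bound enters, and the key elementary fact is that removing a single point cannot increase the diameter.

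Condition (ii) is, verbatim, the forward--simulation argument already used in Theorem \ref{ThMain1}: given a transition $x\rTo^{u}x^{\prime}$ in $\mathbb{S}(\Sigma)$, the covering property of $\mathbb{X}_{M}$ supplies a state $X_{M}^{j^{\prime}}$ with $x^{\prime}\in X_{M}^{j^{\prime}}$, that is $(x^{\prime},X_{M}^{j^{\prime}})\in\mathcal{R}$, and the construction of the transition relation of $\mathbb{A}_{M}(\Sigma)$ then yields $X_{M}^{j}\rTo_{M}^{V^{j^{\prime}}}X_{M}^{j^{\prime}}$ with $u\in V^{j^{\prime}}$. Finally, since every state in $\mathbb{X}_{M}$ is non--empty, $\mathcal{R}(\mathbb{X})=\mathbb{X}_{M}$, so $\mathcal{R}$ witnesses $\mathbb{S}(\Sigma)\preceq_{\varepsilon}\mathbb{A}_{M}(\Sigma)$ and the theorem follows.

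I do not anticipate a serious obstacle, as the argument is essentially a relaxation of Theorem \ref{ThMain1}: exact simulation ($\varepsilon=0$) is loosened to $\varepsilon$--approximate simulation on precisely those states that produce spurious transitions, for which $\mathbb{H}_{M}(X_{M}^{j})=X_{M}^{j}$. The single point requiring a moment of care is the diameter estimate $\Diam(X_{M}^{j}\backslash\{x\})\leq\Diam(X_{M}^{j})$, which is what converts the available bound on $\Diam(\mathbb{H}_{M}(X_{M}^{j}))$ into the required bound on the output distance $\mathbf{d}_{p}(\mathbb{H}(x),\mathbb{H}_{M}(X_{M}^{j}))$.
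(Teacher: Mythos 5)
Your proposal is correct and follows essentially the same route as the paper's proof: the same relation $\mathcal{R}$ (membership $x\in X_{M}^{j}$), the same case split on whether $\mathbb{H}_{M}(X_{M}^{j})$ is empty, and the same reuse of the forward--simulation argument from Theorem \ref{ThMain1} for condition (ii), concluding via the infimum defining $\vec{\mathbf{d}}_{\s}$. Your treatment of the nonempty case is in fact marginally more careful than the paper's, which asserts $\mathbf{d}_{p}(\mathbb{H}(x),\mathbb{H}_{M}(X_{M}^{j}))=\Diam(\mathbb{H}_{M}(X_{M}^{j}))$ outright, whereas you justify the bound via $\Diam(X_{M}^{j}\backslash\{x\})\leq\Diam(X_{M}^{j})$.
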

\begin{proof}
Define $\mathcal{R}\subseteq \mathbb{X}\times \mathbb{X}_{M}$ such that $(x,X_{M}^{j})\in\mathcal{R}$ if and only if $x\in X_{M}^{j}$. Consider any $(x,X_{M}^{j})\in\mathcal{R}$. 
By definition of $\Gran(\mathbb{A}_{M}(\Sigma))$ one gets $
\mathbf{d}_{h}(\mathbb{H}(x),\mathbb{H}_{M}(X_{M}^{j}))$ $\leq \Diam(X_{M}^{j}) \leq \Gran(\mathbb{A}_{M}(\Sigma))$. 
Hence, condition (i) in Definition \ref{ASR} is satisfied. Consider any transition $x\rTo^{u} x^{\prime}$ in $\mathbb{S}(\Sigma)$. 
By definition of $\mathbb{A}_{M}(\Sigma)$ there exists a transition $X_{M}^{j}\rTo^{V}_{M} X_{M}^{j^{\prime}}$ with $u\in V$ and $x^{\prime}\in X_{M}^{j^{\prime}}$, or equivalently $(x^{\prime},X_{M}^{j^{\prime}})\in\mathcal{R}$. Hence, condition (ii) in Definition \ref{ASR} holds.
Since $\mathbb{X}_{M}$ is a covering of $\mathcal{X}$ then $\mathcal{R}(\mathbb{X})=\mathbb{X}_{M}$ from which, condition (iii) holds. 
Hence, $\mathbb{S}(\Sigma)\preceq_{\varepsilon} \mathbb{A}_{M}(\Sigma)$ with $\varepsilon=\Gran(\mathbb{A}_{M}(\Sigma))$. Finally, the result follows from the definition of $\vec{\mathbf{d}}_{\s}$. 
\end{proof}

The rest of this section is devoted to study the convergence of the sequence $\{\mathbb{A}_{M}(\Sigma)\}_{M\in\mathbb{N}_{0}}$ to $\mathbb{S}(\Sigma)$. We start by presenting the following technical result. 

\begin{lemma}
\label{lemma}
$\Gran(\mathbb{A}_{M+1}(\Sigma))\leq \lambda\Gran(\mathbb{A}_{M}(\Sigma))$.
\end{lemma}

\begin{proof}
By (\ref{Eli}) for all states $X_{M+1}^{j}$ in $\mathbb{A}_{M+1}(\Sigma)$ there exist a state $X_{M}^{i_j}\in \mathbb{X}_{M}$ such that $X_{M+1}^{j}\subseteq \Split_{\lambda}(Z)$ for some set $Z\subseteq X_{M}^{i_j}$. By the above condition and the definition of $\Split_{\lambda}$, the inequality $\Diam (X_{M+1}^{j}) \leq \lambda \Diam(Z) \leq \lambda \Diam(X_{M}^{i_j})$ holds. Hence, by applying the operator $\max$ to both sides of the above inequality, one gets 
$\Gran(\mathbb{A}_{M+1}(\Sigma)) = \max_{j} \Diam (X_{M+1}^{j}) \leq \lambda \max_{j} \Diam (X_{M}^{i_j}) \leq \lambda \max_{i} \Diam (X_{M}^{i})=  \lambda\Gran(\mathbb{A}_{M}(\Sigma))$, which concludes the proof. 
\end{proof}

We now have all the ingredients to present one of the main results of this note.

\begin{theorem}
\label{ThMain3}
$ \mathbb{S}(\Sigma)=\stackrel [\leftarrow]{}{\lim} \mathbb{A}_{M}(\Sigma) $.
\end{theorem}

\begin{proof}
Pick any $\varepsilon \in\mathbb{R}^{+}$ and choose $M_{\varepsilon}\in\mathbb{N}_{0}$ such that $\lambda^{M_{\varepsilon}-1}\Gran(\mathbb{A}_{1}(\Sigma))\leq \varepsilon$ .
By the above inequality and by combining Theorem \ref{ThMain2} and Lemma \ref{lemma} for all $M\geq M_{\varepsilon}$, we obtain 
$\vec{\mathbf{d}}_{\s}(\mathbb{S}(\Sigma),\mathbb{A}_{M}(\Sigma)) \leq \Gran(\mathbb{A}_{M}(\Sigma)) \leq \lambda \Gran(\mathbb{A}_{M-1}(\Sigma)) \leq ... \leq  \lambda^{M-1}\Gran(\mathbb{A}_{1}(\Sigma)) \leq \lambda^{M_{\varepsilon}-1}\Gran(\mathbb{A}_{1}(\Sigma))\leq \varepsilon 
$, which concludes the proof.
\end{proof}

\section{Symbolic Control Design} \label{sec7}

In this section we address the design of symbolic control for PWA systems where specifications are expressed in terms of non-deterministic finite automata. This class of specifications is rather general and comprises for example a fragment of Linear Temporal Logic (LTL) formulae called syntactically co--safe LTL formulae. Syntactically co-safe LTL formulae include a large spectrum of finite--time specifications as for example reachability problems with obstacle avoidance and enabling conditions (see e.g. \cite{csLTL} for further details). 
Consider a specification described by the pseudo--metric symbolic system $Q=(\mathbb{X}^{q},\mathbb{U}_{q},\rTo_{q},\mathbb{Y}_{q},\mathbb{H}_{q})$, where 
$\mathbb{X}^{q}=\{\mathcal{X}^{q}_{1},\mathcal{X}^{q}_{2},...,\mathcal{X}^{q}_{N^{q}}\}$ is a finite collection of polytopic subsets of $\mathcal{X}$; 
$\mathbb{U}_{q}=\{0\}$; 
$\rTo_{q}\subseteq \mathbb{X}^{q}\times \mathbb{U}^{q}\times \mathbb{X}^{q}$; 
$\mathbb{Y}_{q}=2^\mathcal{X}$, equipped with the pseudo--metric $\mathbf{d}_{h}$; 
$\mathbb{H}_{q}(\mathcal{X}^{q}_{i})=\mathcal{X}^{q}_{i}$. 
Define $\mathcal{X}^{q}=\bigcup_{i\in[1;N^{q}]}\mathcal{X}^{q}_{i}$. 
We suppose that the collection $\mathbb{Y}_{q}$ of sets $\mathcal{X}^{q}_{i}$ is contained in the partition $\{\mathcal{X}_{i}\}_{i\in [1;N]}$ of $\mathcal{X}$; this assumption can be given without loss of generality by appropriately duplicating the dynamics of $\Sigma$. For ease of notation we denote in the sequel a transition $\mathcal{X}^{q}_{i_1}\rTo_{q}^{0}\mathcal{X}^{q}_{i_2}$ by $\mathcal{X}^{q}_{i_1}\rTo_{q}\mathcal{X}^{q}_{i_2}$. 
The class of control strategies that we consider in this note is specified by a partition $\mathbf{P}=\{\mathbf{P}_{i}\}_{i\in I}$ of $\mathcal{X}$ and a map $\mathcal{K}: \mathbf{P} \rightarrow 2^{\mathcal{U}}$. 
Note that we are not supposing that $\mathbf{P}$ is either finite or countable. When $\mathbf{P}$ is a finite set, the control strategy is said symbolic. 
Map $\mathcal{K}$ associates to an aggregate of states $\mathbf{P}_{i} \in \mathbf{P}$ an aggregate of inputs $\mathcal{K}(\mathbf{P}_{i})\subseteq \mathcal{U}$ representing the collection of admissible inputs.  
Given a control strategy $\mathcal{K}$, we denote by $\Sigma^{\mathcal{K}}$ the closed--loop PWA system $\Sigma$ where $u=\kappa(x)\in \mathcal{K}(\mathbf{P}_{i})$ if $x\in \mathbf{P}_{i}$. With abuse of notation, we denote by $\mathbf{x}(t,x_{0},\kappa)$ the state reached by $\Sigma$ at time $t$ starting from an initial state $x_{0}\in\mathcal{X}$ with feedback control law $\kappa(x)\in\mathcal{K}(\mathbf{P}_{i})$, $x\in \mathbf{P}_{i}$; moreover we write $\kappa\in\mathcal{K}$ when $\kappa(x)\in \mathcal{K}(\mathbf{P}_{i})$ for all $x\in \mathbf{P}_{i}$ and $\mathcal{K}\subseteq \mathcal{K}^{\prime}$ when $\mathcal{K}(\{x\})\subseteq \mathcal{K}^{\prime}(\{x\})$ for all $x\in \mathcal{X}$. We can now formally state the control design problem considered in this note. 
\begin{definition}
A control strategy $\mathcal{K}:\mathbf{P}\rightarrow 2^{\mathcal{U}}$ is said to enforce the specification $Q$ 
on $\Sigma$ if for all initial states $x_{0}\in \mathbf{P}_{i}$ of $\Sigma$ for which $\mathcal{K}(\mathbf{P}_{i})\neq \varnothing$ and for all $\kappa\in \mathcal{K}$ there exists a (possibly infinite) state run $\mathcal{X}^{q}_{i_0}\rTo_{q}\mathcal{X}^{q}_{i_2}\rTo_{q} \,...\,\rTo_{q} \mathcal{X}^{q}_{i_T}$ of $Q$ with length $T$ such that $\mathbf{x}(t,x_{0},\kappa)\in \mathcal{X}^{q}_{i_t}$ and $\mathbf{x}(t+1,x_{0},\kappa)\in \mathcal{X}^{q}_{i_{t+1}}$ for all $t\in [0;T-1]$.
\end{definition}
In the above definition, a control strategy enforces the specification $Q$ in the sense of the so--called similarity games, see e.g. \cite{paulo}. Existence of such a control strategy guarantees that for all initial states for which the control strategy is not empty, the corresponding state runs satisfy the specification. This definition does not exclude the trivial case where the set of initial states, for which a control strategy enforces a given specification, is empty. However, in the sequel we will be interested in (approximating) the maximal control strategy (in the sense of Definition \ref{Cmax}). Hence in that case, if the set of states for which the maximal controller is empty, the control problem has no solution. 
Denote by $\mathbf{K}(\Sigma,Q)$ the collection of all control strategies enforcing the specification $Q$ on $\Sigma$. 
\begin{definition}
\label{Cmax}
The \textit{maximal control strategy} enforcing the specification $Q$ on the PWA system $\Sigma$, is a control strategy $\mathcal{K}^{\ast}\in \mathbf{K}(\Sigma,Q)$ such that $\mathcal{K}\subseteq \mathcal{K}^{\ast}$ for all $\mathcal{K}\in \mathbf{K}(\Sigma,Q)$.
\end{definition}
\begin{proposition}
$\mathcal{K}^{\ast}(\{x\})=\bigcup_{\mathcal{K}\in \mathbf{K}(\Sigma,Q)} \mathcal{K}(\{x\})$.
\end{proposition}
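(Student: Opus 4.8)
The plan is to identify the right--hand side as the maximal control strategy itself. Denote it by $\widehat{\mathcal{K}}$, so that $\widehat{\mathcal{K}}(\{x\})=\bigcup_{\mathcal{K}\in\mathbf{K}(\Sigma,Q)}\mathcal{K}(\{x\})$, regarded as a control strategy over the partition of $\mathcal{X}$ into singletons. Two facts are immediate and purely formal. First, every $\mathcal{K}\in\mathbf{K}(\Sigma,Q)$ satisfies $\mathcal{K}(\{x\})\subseteq\widehat{\mathcal{K}}(\{x\})$ for all $x$, since $\mathcal{K}(\{x\})$ is one of the sets in the defining union. Second, once we know that $\widehat{\mathcal{K}}\in\mathbf{K}(\Sigma,Q)$, the asserted identity follows by two inclusions: any maximal strategy $\mathcal{K}^{\ast}$ satisfies $\widehat{\mathcal{K}}(\{x\})\subseteq\mathcal{K}^{\ast}(\{x\})$ because each term of the union is dominated by $\mathcal{K}^{\ast}$, while $\mathcal{K}^{\ast}(\{x\})\subseteq\widehat{\mathcal{K}}(\{x\})$ because $\mathcal{K}^{\ast}\in\mathbf{K}(\Sigma,Q)$ is itself one of the sets in the union (and this membership also establishes that a maximal strategy exists). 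Thus the whole proposition reduces to the single substantive claim that $\widehat{\mathcal{K}}$ enforces $Q$.

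To prove that claim I would fix an initial state $x_{0}$ with $\widehat{\mathcal{K}}(\{x_{0}\})\neq\varnothing$ and an arbitrary feedback law $\kappa\in\widehat{\mathcal{K}}$, and consider the trajectory $x_{t}=\mathbf{x}(t,x_{0},\kappa)$, which is unique because $\Sigma$ is deterministic. The guiding idea is that the specification $Q$ constrains only \emph{consecutive} cells, so enforcement can be checked one transition at a time. For each $t$, since $\kappa(x_{t})\in\widehat{\mathcal{K}}(\{x_{t}\})$ there is a strategy $\mathcal{K}_{t}\in\mathbf{K}(\Sigma,Q)$ with $\kappa(x_{t})\in\mathcal{K}_{t}(\{x_{t}\})$. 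Choosing a selection $\kappa^{(t)}\in\mathcal{K}_{t}$ that agrees with $\kappa$ at $x_{t}$ and invoking the enforcement property of $\mathcal{K}_{t}$ with initial state $x_{t}$ yields a run of $Q$ matching the $\kappa^{(t)}$--trajectory; its first transition gives indices $a,b$ with $x_{t}\in\mathcal{X}^{q}_{a}$, $\mathbf{x}(1,x_{t},\kappa^{(t)})\in\mathcal{X}^{q}_{b}$ and $\mathcal{X}^{q}_{a}\rTo_{q}\mathcal{X}^{q}_{b}$. Because $\Sigma$ is deterministic, the one--step successor depends only on the input applied at $x_{t}$, and $\kappa^{(t)}(x_{t})=\kappa(x_{t})$, whence $\mathbf{x}(1,x_{t},\kappa^{(t)})=x_{t+1}$; because the cells $\mathcal{X}^{q}_{i}$ are pairwise disjoint, $a$ and $b$ are the unique indices with $x_{t}\in\mathcal{X}^{q}_{a}$ and $x_{t+1}\in\mathcal{X}^{q}_{b}$. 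Hence the cell containing $x_{t}$ admits a $Q$--transition to the cell containing $x_{t+1}$. Concatenating these one--step witnesses over all $t$ produces a single state run of $Q$ matching the whole trajectory $\{x_{t}\}$, which is exactly what enforcement requires, so $\widehat{\mathcal{K}}\in\mathbf{K}(\Sigma,Q)$.

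I expect the main obstacle to be precisely this last step: unlike a single element of $\mathbf{K}(\Sigma,Q)$, the law $\kappa\in\widehat{\mathcal{K}}$ may at different states draw its input from different constituent strategies, so the generated trajectory is a \emph{mixture} of behaviours admitted by distinct controllers, and one must argue that this mixture still satisfies $Q$. The resolution rests on two structural features that I would make explicit: the transition--wise nature of $Q$, which lets enforcement be verified locally in time, and the determinism of $\Sigma$, which guarantees that $x_{t+1}$ is fixed by $\kappa(x_{t})$ alone and is therefore correctly ``witnessed'' by the single strategy $\mathcal{K}_{t}$ that admits that input. A secondary technical point is the extraction of the first transition from the enforcement property of $\mathcal{K}_{t}$, which a priori quantifies over global feedback laws; this is handled by taking $\kappa^{(t)}$ to be a selection agreeing with $\kappa$ at $x_{t}$ and observing that the matched $Q$--run necessarily covers the already well--defined step $x_{t}\rightarrow x_{t+1}$, so that the needed transition is always present.
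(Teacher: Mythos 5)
Your proposal is correct, and it is worth noting that the paper itself offers \emph{no} proof of this proposition: it is stated bare, with the remark that existence and uniqueness of $\mathcal{K}^{\ast}$ follow from it. You have correctly identified that the statement is not purely formal --- the union over $\mathbf{K}(\Sigma,Q)$ must itself be shown to lie in $\mathbf{K}(\Sigma,Q)$, since an arbitrary union of enforcing strategies need not enforce an arbitrary specification --- and you supply exactly the two structural facts that make it work here: the specification $Q$ is \emph{transition--local} (a state run of $Q$ is nothing more than a path in the graph $(\mathbb{X}^{q},\rTo_{q})$, and since the cells $\mathcal{X}^{q}_{i}$ are pairwise disjoint, the cell sequence of a trajectory is uniquely determined, so enforcement reduces to a conjunction of one--step conditions), and $\Sigma$ is \emph{deterministic}, so the successor $x_{t+1}$ is fixed by $\kappa(x_{t})$ alone and is therefore correctly witnessed by whichever constituent strategy $\mathcal{K}_{t}$ admits that input. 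This is a genuine safety--closure argument that the paper leaves implicit, and your explicit discussion of why a ``mixture'' of behaviours from distinct controllers remains admissible is the right substantive point.

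Two small caveats, both traceable to looseness in the paper's definitions rather than to your argument. First, the enforcement definition quantifies existentially over the run length $T\in\mathbb{N}\cup\{\infty\}$, so your extraction of a \emph{first} transition from the matched run of $\mathcal{K}_{t}$ is not literally forced by the text (a degenerate short run would satisfy the letter of the definition); you flag this and resolve it under the natural non--degenerate reading, which is also the reading the paper's own proof that $\mathcal{K}_{M}\in\mathbf{K}(\Sigma,Q)$ tacitly uses. Relatedly, if the trajectory exits $\mathcal{X}^{q}$ at some time $\tau$, your concatenation terminates there and yields a finite run, which the finite--$T$ allowance accepts; a careful writeup should state this case explicitly. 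Second, your selection $\kappa^{(t)}\in\mathcal{K}_{t}$ agreeing with $\kappa$ at $x_{t}$ formally requires $\mathcal{K}_{t}$ to be nonempty on every cell (otherwise no global selection exists under the paper's definition of $\kappa\in\mathcal{K}$); the paper's own arguments work pointwise and ignore this, so the intended semantics is pointwise and your construction is fine, but the dependence deserves a sentence. Neither caveat is a gap in your reasoning; both are points where you are more careful than the source.
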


From the above result the control strategy $\mathcal{K}^{\ast}$ exists and is unique. In general, control strategy $\mathcal{K}^{\ast}$ is not symbolic and its explicit expression cannot be easily derived. For this reason in the sequel we propose a sequence of control strategies $\mathcal{K}_{M}$, approximating $\mathcal{K}^{\ast}$, that can be computed on the basis of the symbolic systems $\mathbb{A}_{M}(\Sigma)$.  

\begin{definition}
\label{SymbContrDef}
Given the system $\mathbb{A}_{M}(\Sigma)$, define for all $X_{M}^{j}\in\mathbb{X}_{M}$ the graph $\mathcal{G}(X_{M}^{j})=(\mathcal{N},\mathcal{E})$ where $\mathcal{N}$ is the collection of sets $V\in\mathbb{U}_{M}$ such that $X_{M}^{j} \rTo_{M}^{V} X_{M}^{j^{\prime}}$ and $\mathcal{E}$ is the collection of all pairs $(V,V^{\prime})\in \mathcal{N}\times \mathcal{N}$ such that $V\cap V^{\prime}\neq \varnothing$. 
For all connected components $\mathbf{G}_{i}(X_{M}^{j})$ of $\mathcal{G}(X_{M}^{j})$ define the following sets:
$\mathbf{U}_{i}(X_{M}^{j})$ is the union of nodes of $\mathbf{G}_{i}(X_{M}^{j})$; 
$\mathbf{P}_{i}(X_{M}^{j})$ is the union of sets $X_{M}^{j^{\prime}}$ for which $X_{M}^{j} \rTo_{M}^{V}X_{M}^{j^{\prime}}$ and $V$ is a node of $\mathbf{G}_{i}(X_{M}^{j})$. 
Define the control strategy $\mathcal{K}_{M}:\mathbb{X}_{M}\rightarrow 2^{\mathcal{U}}$ such that: 
for all $X_{M}^{j}\not\subseteq \mathcal{X}^{q}$, $\mathcal{K}_{M}(X_{M}^{j})=\varnothing$; for all $X_{M}^{j}\subseteq \mathcal{X}^{q}$, $\mathcal{K}_{M}(X_{M}^{j})=\bigcup_{i} \mathbf{U}_{i}(X_{M}^{j}) \text{ s.t. } \mathbf{P}_{i}(X_{M}^{j})\subseteq \Post_{q}(X_{M}^{j})$, where $\Post_{q}(X_{M}^{j})$ is the union of sets $\mathcal{X}^{q}_{j^{\prime}}\in \mathbb{X}^{q}$ such that 
$X_{M}^{j}\subseteq \mathcal{X}^{q}_{j}\in\mathbb{X}^{q}$ and $\mathcal{X}^{q}_{j}\rTo_{q} \mathcal{X}^{q}_{j^{\prime}}$. 
\end{definition}

From the above definition it is easy to see that $\mathcal{K}_{M}$ is symbolic. 
Moreover, $\mathcal{K}_{M}$ guarantees that the closed--loop PWA system $\Sigma^{\mathcal{K}_{M}}$ satisfies the specification $Q$, as formally stated in the following result. 

\begin{theorem}
$\mathcal{K}_{M} \in \mathbf{K}(\Sigma,Q)$. 
\end{theorem}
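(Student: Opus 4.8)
The plan is to fix an initial state $x_{0}\in X_{M}^{j_{0}}$ with $\mathcal{K}_{M}(X_{M}^{j_{0}})\neq\varnothing$ and an arbitrary feedback $\kappa\in\mathcal{K}_{M}$, and then to exhibit a run of $Q$ matching the (unique, since $\Sigma$ is deterministic) closed--loop trajectory $x_{t}=\mathbf{x}(t,x_{0},\kappa)$. I would argue by induction on $t$, maintaining the invariant that $x_{t}$ belongs to a symbolic state $X_{M}^{j_{t}}\in\mathbb{X}_{M}$ contained in a single specification region $\mathcal{X}^{q}_{i_{t}}\in\mathbb{X}^{q}$. The base case is immediate: $\mathcal{K}_{M}(X_{M}^{j_{0}})\neq\varnothing$ forces $X_{M}^{j_{0}}\subseteq\mathcal{X}^{q}$ by Definition \ref{SymbContrDef}; since each symbolic state is contained in one cell of the partition $\{\mathcal{X}_{i}\}$ (as presupposed by the mode index $i$ in Definition \ref{AbsDef}), the cells are pairwise disjoint, and $\mathbb{X}^{q}$ is contained in that partition, this cell must itself be one of the $\mathcal{X}^{q}_{i}$, giving a unique $i_{0}$ with $X_{M}^{j_{0}}\subseteq\mathcal{X}^{q}_{i_{0}}$.

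The crux is the inductive step, where I must show that the control values admitted by $\mathcal{K}_{M}$ keep the trajectory inside the specification's reachable set. Fix $u=\kappa(x_{t})$. Since $\kappa\in\mathcal{K}_{M}$ and $X_{M}^{j_{t}}\subseteq\mathcal{X}^{q}$, we have $u\in\mathbf{U}_{i^{\ast}}(X_{M}^{j_{t}})$ for some connected component $\mathbf{G}_{i^{\ast}}(X_{M}^{j_{t}})$ with $\mathbf{X}_{i^{\ast}}(X_{M}^{j_{t}})\subseteq\Post_{q}(X_{M}^{j_{t}})$. Let $\ell$ be the mode with $X_{M}^{j_{t}}\subseteq\mathcal{X}_{\ell}$ and set $x_{t+1}=A_{\ell}x_{t}+B_{\ell}u+f_{\ell}$. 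As $\mathbb{X}_{M}$ covers $\mathcal{X}$, there is $X_{M}^{j_{t+1}}\ni x_{t+1}$, so by Definition \ref{AbsDef} the transition $X_{M}^{j_{t}}\rTo_{M}^{V^{j_{t+1}}}X_{M}^{j_{t+1}}$ exists with $u\in V^{j_{t+1}}$; in particular $V^{j_{t+1}}$ is a node of $\mathcal{G}(X_{M}^{j_{t}})$. The key observation is that $u$ also lies in some node of $\mathbf{G}_{i^{\ast}}$ (because $u\in\mathbf{U}_{i^{\ast}}$); since two nodes sharing the point $u$ have nonempty intersection and are hence joined by an edge, $V^{j_{t+1}}$ belongs to the same connected component $\mathbf{G}_{i^{\ast}}$. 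Therefore $X_{M}^{j_{t+1}}\subseteq\mathbf{X}_{i^{\ast}}(X_{M}^{j_{t}})\subseteq\Post_{q}(X_{M}^{j_{t}})$.

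It then remains to close the induction and read off the run. From $x_{t+1}\in\Post_{q}(X_{M}^{j_{t}})$ and the definition of $\Post_{q}$, the point $x_{t+1}$ lies in some $\mathcal{X}^{q}_{i_{t+1}}$ with $\mathcal{X}^{q}_{i_{t}}\rTo_{q}\mathcal{X}^{q}_{i_{t+1}}$, and since $\mathcal{X}^{q}_{i_{t+1}}$ is a partition cell while $X_{M}^{j_{t+1}}$ sits in a single cell, we obtain $X_{M}^{j_{t+1}}\subseteq\mathcal{X}^{q}_{i_{t+1}}$, restoring the invariant and appending the valid transition $\mathcal{X}^{q}_{i_{t}}\rTo_{q}\mathcal{X}^{q}_{i_{t+1}}$ to the run. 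Iterating, if $\mathcal{K}_{M}(X_{M}^{j_{t+1}})\neq\varnothing$ at every step one gets an infinite run; otherwise the closed loop admits no input at the first state with empty $\mathcal{K}_{M}$, the trajectory halts there, and the construction yields a finite run of the required form (the assumption that $Q$ is non--blocking guarantees $\Post_{q}$ is nonempty whenever the controller is active). In either case $\mathbf{x}(t,x_{0},\kappa)\in\mathcal{X}^{q}_{i_{t}}$ for all relevant $t$, so $\mathcal{K}_{M}$ enforces $Q$ and $\mathcal{K}_{M}\in\mathbf{K}(\Sigma,Q)$. The main obstacle is precisely the connected--component argument of the second paragraph: it is what tames the spurious (nondeterministic) symbolic transitions, ensuring that no matter which overlapping input set the deterministic PWA successor happens to realize, it still lands in the very same $\mathbf{X}_{i^{\ast}}$ that $\mathcal{K}_{M}$ certified against $\Post_{q}$.
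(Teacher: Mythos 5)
Your proof is correct and follows essentially the same route as the paper's: an induction along the closed--loop trajectory showing that any input $u\in\mathcal{K}_{M}(X_{M}^{j})$ belongs to some $\mathbf{U}_{i}(X_{M}^{j})$ with $\mathbf{X}_{i}(X_{M}^{j})\subseteq\Post_{q}(X_{M}^{j})$, so the successor stays consistent with a run of $Q$. Indeed, your connected--component argument (the realized input set $V^{j_{t+1}}$ shares the point $u$ with a node of $\mathbf{G}_{i^{\ast}}$, hence is edge--joined to it and lies in the same component, forcing $X_{M}^{j_{t+1}}\subseteq\mathbf{X}_{i^{\ast}}(X_{M}^{j_{t}})$) spells out precisely the step the paper's proof asserts without justification, namely that $\mathbf{x}(x,1,u)\in\mathbf{X}_{i}(X_{M}^{j})$.
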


\begin{proof}
We prove the statement by induction, by showing that starting from a state $x\in \mathcal{X}^{q}$ fulfilling the specification, by applying a control strategy $\mathcal{K}_{M} \in \mathbf{K}(\Sigma,Q)$ a state is reached which again satisfies the specification. Consider any $x\in \mathcal{X}^{q}$ for which $\mathcal{K}_{M}(\{x\})\neq\varnothing$ and any $u\in \mathcal{K}_{M}(\{x\})$. Let $X_{M}^{j}\in\mathbb{X}_{M}$ be such that $x\in X_{M}^{j}$. 
Since $u\in \mathcal{K}_{M}(\{x\})$ there exists a connected component $\mathbf{G}_{i}(X_{M}^{j})$ of $\mathcal{G}(X_{M}^{j})$ such that $u\in \mathbf{U}_{i}(X_{M}^{j})$ and $\mathbf{x}(x,1,u) \in \mathbf{P}_{i}(X_{M}^{j})$. Since $\mathbf{P}_{i}(X_{M}^{j})\subseteq \Post_{q}(X_{M}^{j})$ the specification $Q$ is satisfied. 
\end{proof}

The practical computation of the symbolic controller $\mathcal{K}_{M}$ is based on basic operations on graphs and polytopes. 
The following result establishes a sufficient condition to find the maximal control strategy $\mathcal{K}^{\ast}$.

\begin{proposition}
If $\mathbf{X}_{M}=\Bisim(\mathbf{X}_{M})$ then $\mathcal{K}^{\ast}=\mathcal{K}_{M}$.
\end{proposition}

The proof of the above result is a direct 
consequence of the definitions of $\mathcal{K}_{M}$ and $\mathcal{K}^{\ast}$ and of Proposition \ref{coro} and is therefore omitted. We conclude this section by showing that the sequence $\mathcal{K}_{M}$ converges to $\mathcal{K}^{\ast}$. We firstly provide a representation of (symbolic) control strategies in terms of (symbolic) systems. 

\begin{definition}
Given the control strategy $\mathcal{K}^{\ast}$ define the pseudo--metric system $\mathbb{S}(\mathcal{K}^{\ast})=(\mathbb{X},\mathbb{U},$ $\rTo_{\mathcal{K}^{\ast}},\mathbb{Y},\mathbb{H})$, 
where entities $\mathbb{X}$, $\mathbb{U}$, $\mathbb{Y}$ and $\mathbb{H}$ are defined in Definition \ref{SysPWA} and 
 $x\rTo_{\mathcal{K}^{\ast}}^{u} x^{\prime}$ if and only if $x\rTo^{u} x^{\prime}$ in $\mathbb{S}(\Sigma)$ and $u\in \mathcal{K}^{\ast}(\{x\})$. 
\end{definition}

\begin{definition}
Given the symbolic control strategy $\mathcal{K}_{M}$ define the pseudo--metric symbolic system $\mathbb{S}(\mathcal{K}_{M})=(\mathbb{X}_{M},\mathbb{U}_{M},\rTo_{\mathcal{K}_{M}},\mathbb{Y}_{M},\mathbb{H}_{M})$, where entities $\mathbb{X}_{M}$, $\mathbb{U}_{M}$, $\mathbb{Y}_{M}$ and $\mathbb{H}_{M}$ are defined in Definition \ref{AbsDef} and $X_{M}^{j}\rTo_{\mathcal{K}_{M}}^{V}X_{M}^{j^{\prime}}$ if and only if $X_{M}^{j}\rTo^{V}_{M} X_{M}^{j^{\prime}}$ in $\mathbb{A}_{M}(\Sigma)$ and $V\subseteq \mathcal{K}_{M}(X_{M}^{j})$. 
\end{definition}

We can now give the following result that quantifies the distance between $\mathcal{K}_{M}$ and $\mathcal{K}^{\ast}$. 

\begin{theorem}
\label{ThMain4}
$\vec{\mathbf{d}}_{\s}(\mathbb{S}(\mathcal{K}_{M}),\mathbb{S}(\mathcal{K}^{\ast}))\leq \Gran(\mathbb{A}_{M}(\Sigma))$. 
\end{theorem}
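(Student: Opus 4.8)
The plan is to mimic the structure of the proof of Theorem \ref{ThMain2}, namely to exhibit an $\varepsilon$--approximate simulation relation from $\mathbb{S}(\mathcal{K}_M)$ to $\mathbb{S}(\mathcal{K}^\ast)$ with $\varepsilon = \Gran(\mathbb{A}_M(\Sigma))$, and then invoke the definition of the simulation metric $\vec{\mathbf{d}}_{\s}$ to conclude. The natural candidate relation is $\mathcal{R}\subseteq \mathbb{X}_M\times\mathbb{X}$ defined by $(X_M^j,x)\in\mathcal{R}$ if and only if $x\in X_M^j$; this is exactly the inverse of the relation used in Theorems \ref{ThMain1} and \ref{ThMain2}, adapted to the fact that now the abstraction $\mathbb{S}(\mathcal{K}_M)$ is the source and the concrete closed--loop system $\mathbb{S}(\mathcal{K}^\ast)$ is the target.

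\textbf{Output condition.} First I would verify condition (i) of Definition \ref{ASR}. For any $(X_M^j,x)\in\mathcal{R}$ the output distance is $\mathbf{d}_p(\mathbb{H}_M(X_M^j),\mathbb{H}(x))=\mathbf{d}_p(\mathbb{H}_M(X_M^j),\{x\})$. When $\mathbb{H}_M(X_M^j)=\varnothing$ this is $0$; otherwise $\mathbb{H}_M(X_M^j)=X_M^j\ni x$, so $\mathbf{d}_p(X_M^j,\{x\})=\Diam(X_M^j\setminus\{x\})\leq\Diam(X_M^j)=\Diam(\mathbb{H}_M(X_M^j))\leq\Gran(\mathbb{A}_M(\Sigma))$. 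This reuses verbatim the output estimate from the proof of Theorem \ref{ThMain2}, so condition (i) holds with $\varepsilon=\Gran(\mathbb{A}_M(\Sigma))$.

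\textbf{Transition matching (the crux).} The main obstacle is condition (ii): given a controlled transition $X_M^j\rTo_{\mathcal{K}_M}^{V^{j'}}X_M^{j'}$ in $\mathbb{S}(\mathcal{K}_M)$ and any $(X_M^j,x)\in\mathcal{R}$, I must produce a controlled transition $x\rTo_{\mathcal{K}^\ast}^{u}x^+$ in $\mathbb{S}(\mathcal{K}^\ast)$ with $(X_M^{j'},x^+)\in\mathcal{R}$, i.e. $x^+\in X_M^{j'}$. The existence of a matching \emph{uncontrolled} transition $x\rTo^u x^+$ with $x^+\in X_M^{j'}$ for a suitable $u\in V^{j'}$ follows along the lines of the forward simulation argument in Theorem \ref{ThMain1}. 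The delicate point is that the chosen input $u$ must actually be admissible for the \emph{maximal} controller, i.e. $u\in\mathcal{K}^\ast(\{x\})$. Here I would argue that every input enabled by the symbolic controller is also enabled by the maximal one: since $\mathcal{K}_M\in\mathbf{K}(\Sigma,Q)$ (the preceding theorem), the symbolic controller enforces the specification $Q$, hence any $\kappa$ consistent with $\mathcal{K}_M$ belongs to $\mathbf{K}(\Sigma,Q)$, and by maximality of $\mathcal{K}^\ast$ we have $\mathcal{K}_M(X_M^j)\subseteq\mathcal{K}^\ast(\{x\})$ for all $x\in X_M^j$; in particular $V^{j'}\subseteq\mathcal{K}_M(X_M^j)\subseteq\mathcal{K}^\ast(\{x\})$, so the chosen $u$ is $\mathcal{K}^\ast$--admissible and the controlled transition $x\rTo_{\mathcal{K}^\ast}^u x^+$ indeed exists. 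This inclusion is the real content of the proof and is where I expect the argument to require the most care.

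\textbf{Conclusion.} Finally I would check the covering condition $\mathcal{R}(\mathbb{X}_M)=\mathbb{X}$: since $\mathbb{X}_M$ is a covering of $\mathcal{X}=\mathbb{X}$ (as noted after Definition \ref{AbsDef}), every $x\in\mathbb{X}$ lies in some $X_M^j$, so $\mathcal{R}(\mathbb{X}_M)=\mathbb{X}$. Thus $\mathcal{R}$ is an $\varepsilon$--approximate simulation relation from $\mathbb{S}(\mathcal{K}_M)$ to $\mathbb{S}(\mathcal{K}^\ast)$ with $\varepsilon=\Gran(\mathbb{A}_M(\Sigma))$, which gives $\mathbb{S}(\mathcal{K}_M)\preceq_\varepsilon\mathbb{S}(\mathcal{K}^\ast)$, and the bound on $\vec{\mathbf{d}}_{\s}(\mathbb{S}(\mathcal{K}_M),\mathbb{S}(\mathcal{K}^\ast))$ follows directly from the definition of the simulation metric as an infimum over admissible precisions.
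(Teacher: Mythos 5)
Your proposal follows the paper's proof essentially verbatim: the same relation $\mathcal{R}\subseteq\mathbb{X}_{M}\times\mathbb{X}$ with $(X_{M}^{j},x)\in\mathcal{R}$ iff $x\in X_{M}^{j}$, the same output estimate via $\Gran(\mathbb{A}_{M}(\Sigma))$ for condition (i), the same key inclusion $\mathcal{K}_{M}\subseteq\mathcal{K}^{\ast}$ obtained from $\mathcal{K}_{M}\in\mathbf{K}(\Sigma,Q)$ together with maximality of $\mathcal{K}^{\ast}$, and the same conclusion from the definition of $\vec{\mathbf{d}}_{\s}$ as an infimum. The only cosmetic difference is that you attribute the existence of a matching transition landing in $X_{M}^{j^{\prime}}$ to the argument of Theorem \ref{ThMain1}, whereas the paper asserts it directly ``by definition of $\mathcal{K}_{M}$''; neither treatment elaborates this step further, so your level of detail matches the paper's own.
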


\begin{proof}
Define $\mathcal{R}\subseteq \mathbb{X}_{M}\times \mathbb{X}$ such that $(X_{M}^{j},x)\in\mathcal{R}$ if and only if $x\in X_{M}^{j}$. Consider any $(X_{M}^{j},x)\in\mathcal{R}$. By definition of $\Gran(\mathbb{A}_{M}(\Sigma))$ one gets $\mathbf{d}_{h}(\mathbb{H}_{M}(X_{M}^{j}),\mathbb{H}(x)) \leq \Diam(\mathbb{H}_{M}(X_{M}^{j})) \leq \Gran(\mathbb{A}_{M}(\Sigma))$ from which, condition (i) in Definition \ref{ASR} holds. We now show that also condition (ii) in Definition \ref{ASR} is satisfied. Consider any transition $X_{M}^{j}\rTo^{V}_{\mathcal{K}_{M}} X_{M}^{j^{\prime}}$ in $\mathbb{S}(\mathcal{K}_{M})$. By definition of 
$\mathcal{K}^{\ast}$, $\mathcal{K}_{M}(X_{M}^{j})\subseteq \mathcal{K}^{\ast}(X_{M}^{j})$. Hence, for all $u\in\mathcal{K}_{M}(X_{M}^{j^{\prime}})$, $x \rTo_{\mathcal{K^{\ast}}}^{u} x^{\prime}$. In particular, by definition of $\mathcal{K}_{M}$ there exists $u\in \mathcal{K}_{M}(X_{M}^{j})\subseteq \mathcal{K}^{\ast}(X_{M}^{j})$ such that $x \rTo_{\mathcal{K^{\ast}}}^{u} x^{\prime}$ and $x^{\prime}\in X_{M}^{j^{\prime}}$ from which, condition (ii) in Definition \ref{ASR} is satisfied. Since $\mathbb{X}_{M}$ is a partition of $\mathcal{X}$ then condition (iii) in Definition \ref{ASR} holds. Finally, the result follows from the definition of $\vec{\mathbf{d}}_{\s}$. 
\end{proof}

We can now present the second main result of this note.

\begin{theorem}
\label{ThMain5}
$\stackrel [\rightarrow]{}{\lim} \mathbb{S}(\mathcal{K}_{M})=\mathbb{S}(\mathcal{K}^{\ast})$.
\end{theorem}

The proof of the above result can be obtained by combining Lemma \ref{lemma} and Theorem \ref{ThMain4}, along the lines of the proof of Theorem \ref{ThMain3}, and is therefore omitted. 

\section{An illustrative example}

Consider a PWA system $\Sigma=(\mathbb{R}^{2},\mathcal{U},\{\Sigma_{1},\Sigma_{2},\Sigma_{3},\Sigma_{4}\})$, where:
\[
\begin{array}
{l}
A_{1}=\left[
\begin{array}
{rr}
0.5  & 0 \\
0 & -0.5
\end{array}
\right];
B_{1}=\left[
\begin{array}
{rr}
0  & 0 \\
0 &  0
\end{array}
\right];
f_{1}=
\left[
\begin{array}
{r}
3 \\
0.7
\end{array}
\right]; 
\\
A_{2}=\left[
\begin{array}
{rr}
0.3  & 0.1 \\
0 & 0.2
\end{array}
\right];
B_{2}=\left[
\begin{array}
{rr}
0.1  & 0 \\
0 & 0.2
\end{array}
\right];
f_{2}=
\left[
\begin{array}
{r}
0 \\
0.4
\end{array}
\right]; 
\\
A_{3}=\left[
\begin{array}
{rr}
0.8 & 0 \\
0.2 & 0.2
\end{array}
\right];
B_{3}=\left[
\begin{array}
{rr}
0.1  & 0 \\
0 & 0.2
\end{array}
\right];
f_{3}=
\left[
\begin{array}
{r}
-2.5 \\
0
\end{array}
\right]; 
\\
A_{4}=\left[
\begin{array}
{rr}
0.2  & 0 \\
0 & 0.2
\end{array}
\right];  
B_{4}=\left[
\begin{array}
{rr}
0.5  & 0 \\
0 & 0.5
\end{array}
\right];
f_{4}=
\left[
\begin{array}
{r}
1.7 \\
1
\end{array}
\right]. 
\end{array}
\]
We set $\mathcal{U}=[-0.25,0.25]\times[-0.25,0.25]$, $\mathcal{X}_{1}=[-3,-1[\times [0,2[$, $\mathcal{X}_{2}=[-1,1[\times [0,2[$, $\mathcal{X}_{3}=[1,3[\times [0,2[$ and $\mathcal{X}_{4}=[-3,3[\times [-2,0[$. 
\begin{figure*}[t]
\label{fig01}
\centering
\subfigure[ ]{
\begin{tikzpicture}[->,
shorten >=0.1pt,%
auto,node distance=1.5cm,
inner sep=0.1 pt ,bend angle=30]
\tikzstyle{every state}=[minimum size=6mm]
\tikzstyle{every node}=[font=\footnotesize]
\node[state, fill=blue!60] (1) {$\mathcal{X}_{1}$};
\node[state, fill=green!80] (2) [right of=1] {$\mathcal{X}_{2}$};
\node[state, fill=yellow!80] (3) [right of=2] {$\mathcal{X}_{3}$};
\node[state, fill=red!80] (4) [below of=2] {$\mathcal{X}_{4}$};
\path[->] 
(1) edge node {} (4)
(1) edge [bend left] node {} (3)
(2) edge [loop left] node {} (2)
(3) edge node {} (2)
(3) edge [bend left] node {} (1)
(4) edge node {} (3)
(4) edge node {} (2);
\end{tikzpicture}
} 
\subfigure[ ]{
\begin{tikzpicture}[->,
shorten >=0.1pt,%
auto,node distance=1.5cm, 
inner sep=0.1 pt ,bend angle=20]
\tikzstyle{every state}=[minimum size=6mm]
\tikzstyle{every node}=[font=\footnotesize]
\node[state, fill=blue!60] (1) {$X_{1}^{1}$};
\node[state, fill=blue!60] (2) [ right of=1] {$X_{1}^{2}$};
\node[state, fill=blue!60] (3) [ right of=2] {$X_{1}^{3}$};
\node[state, fill=blue!60] (4) [ right of=3] {$X_{1}^{4}$};
\node[state, fill=green!80] (5) [ right of=4] {$X_{1}^{5}$};
\node[state, fill=green!80] (6) [ right of=5] {$X_{1}^{6}$};
\node[state, fill=yellow!80] (7) [ right of=6] {$X_{1}^{7}$};
\node[state, fill=yellow!80] (8) [ below of=1] {$X_{1}^{8}$};
\node[state, fill=yellow!80] (9) [ right of=8] {$X_{1}^{9}$};
\node[state, fill=yellow!80] (10) [ right of=9] {$X_{1}^{10}$};
\node[state, fill=red!80] (11) [ right of=10] {$X_{1}^{11}$};
\node[state, fill=red!80] (12) [ right of=11] {$X_{1}^{12}$};
\node[state, fill=red!80] (13) [ right of=12] {$X_{1}^{13}$};
\node[state, fill=red!80] (14) [ right of=13] {$X_{1}^{14}$};
\path[->] 
(1) edge [bend right] node {} (7)
(1) edge node {} (9)
(2) edge node {} (9)
(3) edge node {} (14)
(4) edge node {} (14)
(5) edge [loop left] node {} (5)
(6) edge node {} (5)
(7) edge [bend left] node {} (2)
(7) edge [bend left] node {} (5)
(8) edge node {} (2)
(8) edge node {} (5)
(9) edge node {} (2)
(9) edge node {} (5)
(10) edge node {} (2)
(10) edge node {} (5)
(10) edge node {} (6)
(11) edge node {} (7)
(11) edge node {} (5)
(12) edge node {} (5)
(12) edge node {} (6)
(12) edge node {} (7)
(12) edge [bend right] node {} (8)
(13) edge node {} (5)
(13) edge node {} (6)
(13) edge node {} (7)
(13) edge [bend right] node {} (8)
(14) edge node {} (7)
(14) edge [bend right] node {} (8)
(14) edge [bend right] node {} (9)
(14) edge [bend right] node {} (10);
\end{tikzpicture}
}
\label{fig0}
\caption{In the left panel, the system induced by $\Bisim(\mathbf{X}_{0})$. In the right panel, system $\mathbb{A}_{1}(\Sigma)$ induced by $\mathbf{X}_{1}=\Split_{\lambda}(\Bisim(\mathbf{X}_{0}))$. The colors in the two systems indicate which state of the system in the right panel is a refinement of a state of the system in the left panel. For example states $X_{1}^{5}$ and $X_{1}^{6}$ in the right panel are marked green because they are a refinement of the green state $\mathcal{X}_{2}$ in the left panel.}
\end{figure*}
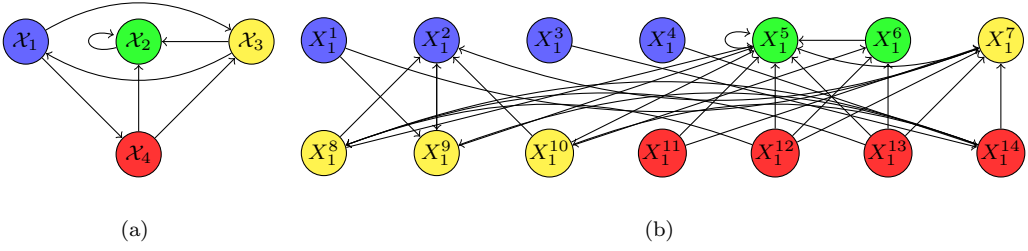
We choose a contraction rate $\lambda=0.8$ and we consider the following splitting policy. Given a polytopic set $P$ let $[a_{1},b_{1}]\times [a_{2},b_{2}]\times ... \times[a_{n},b_{n}]$ be the smallest hyperrectangle containing $P$ and let $i^{\ast}=\arg \max_{i\in [1;n]}\Vert b_{i}-a_{i} \Vert$. Define $\Split_{\lambda}(P)=\{P_1,P_2\}$, where 
$P_1=P\cap ([a_1,b_1] \times ... \times [a_{i^{\ast}-1},b_{i^{\ast}-1}] \times [a_{i^{\ast}},(a_{i^{\ast}}+b_{i^{\ast}})/2]  \times [a_{i^{\ast}+1},b_{i^{\ast}+1}] \times ... \times [a_n,b_n])$ and $P_2=P\cap ([a_1,b_1] \times ... \times [a_{i^{\ast}-1},b_{i^{\ast}-1}] \times [(a_{i^{\ast}}+b_{i^{\ast}})/2,b_{i^{\ast}}]  \times [a_{i^{\ast}+1},b_{i^{\ast}+1}] \times ... \times [a_n,b_n])$. 
If $\Diam(P_{i})> \lambda \Diam(P)$ the above splitting policy is repeated, until the sets obtained satisfy condition (iii) in Definition \ref{DefSplit}. We computed the symbolic systems $\mathbb{A}_{M}(\Sigma)$ with orders $M=1,2,3,4$. Figure 1 shows the construction of system $\mathbb{A}_{1}(\Sigma)$: in the left panel the system induced by $\Bisim(\mathbf{X}_{0})$ and in the right panel system $\mathbb{A}_{1}(\Sigma)$. The operator $\Split_{\lambda}$ cuts set $\mathcal{X}_{1}$ (resp. $\mathcal{X}_{2}$; $\mathcal{X}_{3}$; $\mathcal{X}_{4}$) into sets $X_{1}^{i}$ ($i\in [1;4]$) (resp.
($i\in [5;6]$); ($i\in [7;10]$); ($i\in [11;14]$)). The transition relation of the system in Figure 1 (a) induces the transition relation of the system in Figure 1 (b); for example, transition $\mathcal{X}_{1} \rTo \mathcal{X}_{4}$ in Figure 1 (a) corresponds to the two transitions $X_{1}^{3} \rTo X_{1}^{14}$ and $X_{1}^{4} \rTo X_{1}^{14}$ in Figure 1 (b). We do not report details on $\mathbb{A}_{i}(\Sigma)$, $i\in [2;4]$ for lack of space. Table \ref{TabEx} details space and time complexity indicators in constructing $\mathbb{A}_{i}(\Sigma)$ and the granularity indicator $\Gran(\mathbb{A}_{M}(\Sigma))$. Function $\Gran(\mathbb{A}_{M}(\Sigma))$ is decreasing and such that $\Gran(\mathbb{A}_{M+1}(\Sigma))\leq \lambda \Gran(\mathbb{A}_{M}(\Sigma))$. 
We now use these symbolic systems to solve a control design problem. Our specification $Q$ consists in a finite--time reachability problem with obstacle avoidance and time constraints: starting from region $\mathcal{X}_{1}$, reach region $\mathcal{X}_{3}$ in at most two steps while avoiding region $\mathcal{X}_{2}$ to then return in one step to region $\mathcal{X}_{1}$. This specification translates in the collection of transitions 
$\mathcal{X}_1 \rTo_{q} \mathcal{X}_1 \rTo_{q} \mathcal{X}_3 \rTo_{q} \mathcal{X}_1$, $\mathcal{X}_{1} \rTo_{q} \mathcal{X}_{3}\rTo_{q} \mathcal{X}_{1}$
and $\mathcal{X}_{1}\rTo_{q} \mathcal{X}_{4}\rTo_{q} \mathcal{X}_{3}\rTo_{q} \mathcal{X}_{1}$. 
We implemented the results presented in the previous section and we obtained the controller $\mathcal{K}_{M}$. Figure 2 illustrates for any order $M\in [1;4]$, the collection of states $x\in\mathcal{X}$ for which $\mathcal{K}_{M}(x) \neq \varnothing$. 
It is readily seen that this collection is increasing (in the sense of the preorder induced by $\subseteq$) with respect to $M$: as soon as the abstraction becomes finer the corresponding controller is able to find larger regions of the state space which satisfy the specification. 
Table \ref{TabEx} (last column) reports the percentage of the area of the region $\mathcal{X}$ that is covered by a non-empty controller solving the specification. 
Note that for $M=1$ and $M=2$ there is no control strategy that steers a state of region $\mathcal{X}_{3}$ into region $\mathcal{X}_{1}$ (set $\mathcal{X}_{3}=[1,3[\times [0,2[$ is covered by no coloured polytope). 

\begin{figure*}[t]
\label{fig}
\begin{center}
\includegraphics[scale=0.2]{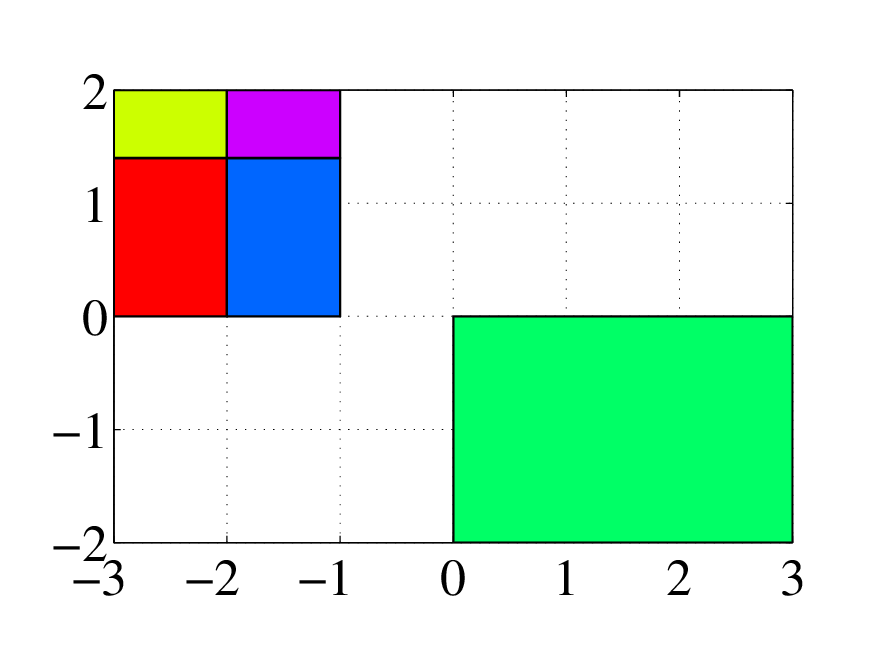}
\includegraphics[scale=0.2]{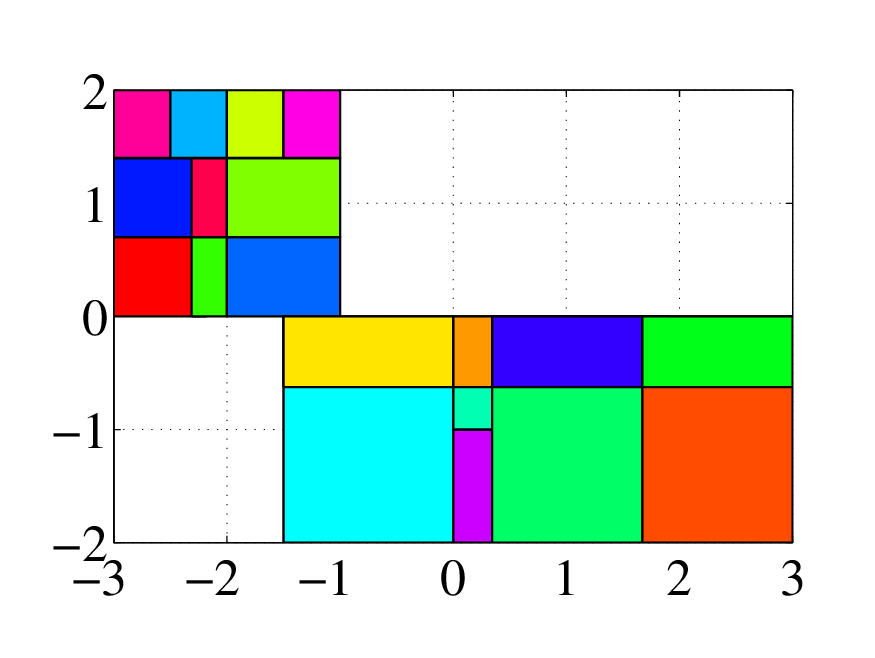}
\includegraphics[scale=0.2]{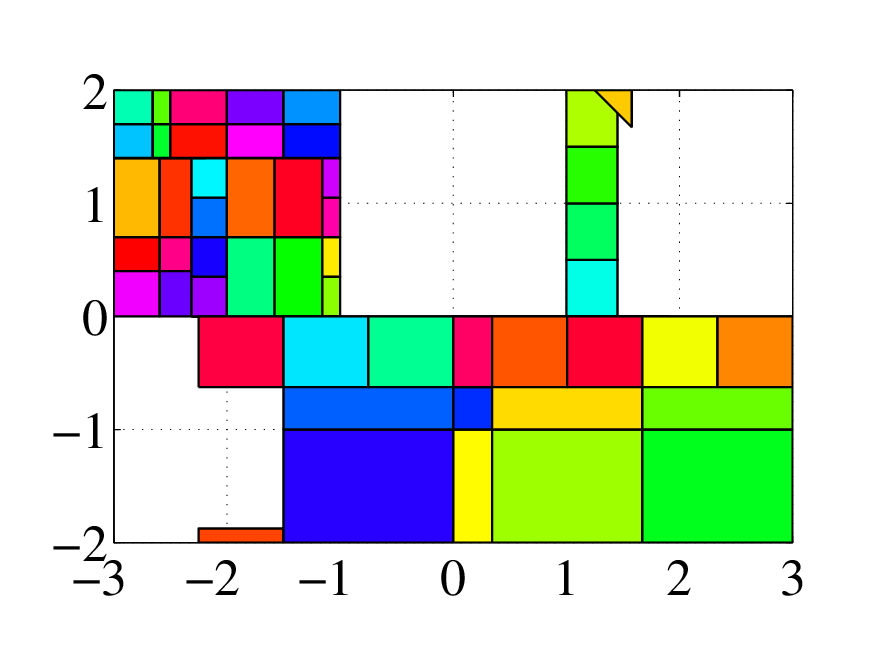}
\includegraphics[scale=0.2]{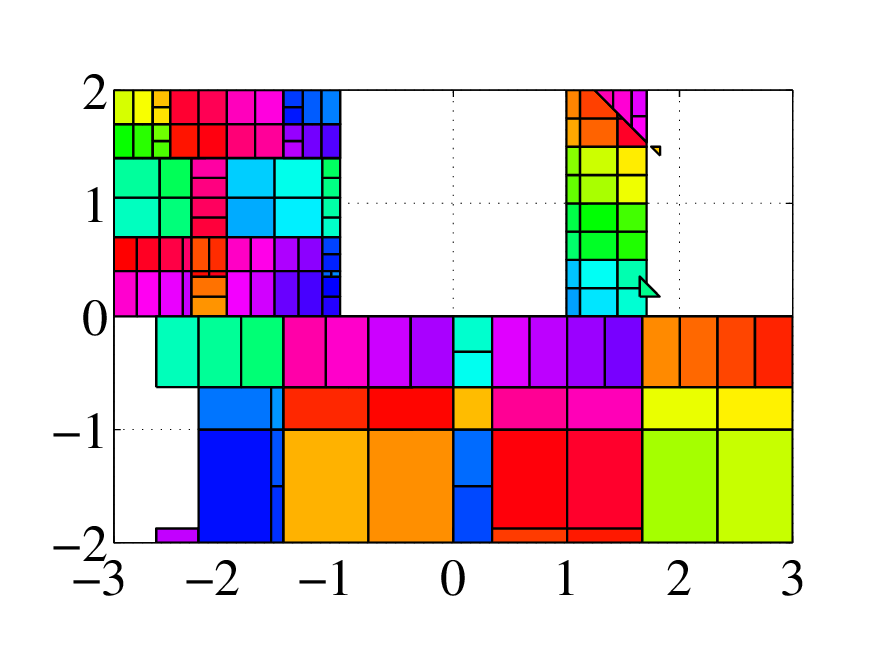}
\caption{
From the left, collection of states $x\in\mathcal{X}$ for which $\mathcal{K}_{M}(x)\neq \varnothing$, for $M=1,2,3,4$. 
}
\end{center}
\end{figure*}
\begin{table}
\begin{center}
\begin{tabular}
[c]{||l|r|r|r|r||r||}
\hline
\hline
M & $|\mathbb{X}_{M}|$ & $|\rTo_{M}|$ & $\Gran(\mathbb{A}_{M}(\Sigma))$ & Time (s) & \% \\
\hline
1 & 14 & 30 & 3 & 0.3642 & 41.66\\
\hline
2 & 48 & 166 & 2 & 2.3612 & 54.16\\
\hline
3 & 172 & 814 & 1.5 & 28.9883 & 62.60 \\
\hline
4 & 564 & 4847 & 1 & 408.3723 & 73.80\\
\hline
\hline
\end{tabular}
\caption{ }
\label{TabEx}
\end{center}
\end{table}

\section{Discussion} \label{sec8}
In this note we proposed an approach based on the notion of approximate simulation to the construction of symbolic models and the control design of PWA systems.
If compared to previous work on discrete abstractions of PWA systems, while \cite{Lee2012} and \cite{BeltaTACTN2010} use a sequence of "simulations" for
stability and formal verification problems, respectively, this work uses a sequence of "approximate simulations" for the design of symbolic controllers that satisfy a symbolic specification within prescribed accuracy.
Approximate (bi)simulation has been also employed in \cite{PolaAutom2008,MajidTAC11} and \cite{GirardTAC2010} for the construction of symbolic models for nonlinear control and switched systems. Our results compare as follows, to these works. The results of \cite{PolaAutom2008,GirardTAC2010} propose approximately bisimilar symbolic models for incrementally stable nonlinear control and switched systems. 
Our results are weaker than the ones in \cite{PolaAutom2008,GirardTAC2010} (approximate simulation vs. approximate bisimulation) but do not require stability of PWA systems. Moreover, the results in \cite{PolaAutom2008,GirardTAC2010} cannot be directly applied to the present framework because PWA systems are characterized by state--dependent discrete transitions. The work in \cite{MajidTAC11} improves the work in \cite{PolaAutom2008} by removing the stability assumption; hence, it can be applied to the models considered in this paper. 
However, while the results in \cite{MajidTAC11} are based on a uniform discretization of the state space which can imply a large computational load, our results avoid this problem by working directly with the initial partition of the PWA systems, and by refining these sets step--by--step.  
As an example, we computed an abstraction of the PWA system $\Sigma$ in Section 7 by adapting the results of \cite{MajidTAC11} and compared it with  $\mathbb{A}_{1}(\Sigma)$. In order to get a resolution that is comparable with the one of $\mathbb{A}_{1}(\Sigma)$, we select the precision $\varepsilon=\min_{i\in [1;14]}\varepsilon_{i}= 0.125$, where $\varepsilon_{i}$ is the minimal length of the sides of the smallest hyperrectangle containing the polytope $X_{1}^{i}\in \mathbb{X}_{1}$ ($i\in [1;14]$). With this choice of $\varepsilon$ we obtained an abstraction consisting of $154$ states (vs. $14$ states of $\mathbb{A}_{1}(\Sigma)$) and $3248$ transitions (vs. $30$ transitions of $\mathbb{A}_{1}(\Sigma)$).
In future work we plan to develop efficient computational tools to construct the proposed abstractions and controllers. Useful insights in this direction are reported in \cite{Belta2012,PolaTACTN2011}.

\bibliographystyle{alpha}
\bibliography{biblio1}

\end{document}